\newtheorem{definition}{Definition}
\newtheorem{theorem}{Theorem}
\newcommand{\captionfonts}{\footnotesize}
\long\def\@makecaption#1#2{%
  \vskip\abovecaptionskip
  \sbox\@tempboxa{{\captionfonts #1: #2}}%
  \ifdim \wd\@tempboxa >\hsize
    {\captionfonts #1: #2\par}
  \else
    \hbox to\hsize{\hfil\box\@tempboxa\hfil}%
  \fi
  \vskip\belowcaptionskip}
\begin{document}
\title{Quantum Model Theory (QMod): \\ Modeling Contextual Emergent Entangled Interfering Entities}
\author{Diederik Aerts and Sandro Sozzo \vspace{0.5 cm} \\ 
        \normalsize\itshape
        Center Leo Apostel for Interdisciplinary Studies \\
        \normalsize\itshape
        Brussels Free University \\ 
        \normalsize\itshape
         Krijgskundestraat 33, 1160 Brussels, Belgium \\
        \normalsize
        E-Mails: \url{diraerts@vub.ac.be,ssozzo@vub.ac.be} \\
               }
\date{}
\maketitle
\begin{abstract}
\noindent
In this paper we present \emph{Quantum Model Theory} (\emph{QMod}), a theory we developed to model entities that entail the typical quantum effects of {\it contextuality}, {\it superposition}, {\it interference}, {\it entanglement} and {\it emergence}. The aim of \emph{QMod} is to put forward a theoretical framework that 
is more general than standard quantum mechanics, in the sense that, for its complex version it only uses this quantum calculus locally, i.e. for each context corresponding to a measurement, and for its real version it does not need the property of `linearity of the set of states' to model the quantum effect. In this sense, \emph{QMod} is a generalization of quantum mechanics, similar to how the general relativity manifold mathematical formalism is a generalization of special relativity. 
We prove by means of a representation theorem that \emph{QMod} can be used for any entity entailing the typical quantum effects mentioned above. Some examples of application of QMod in concept theory and macroscopic physics are also considered.
\end{abstract}

\medskip
{\bf Keywords}: Quantum modeling, contextuality, interference, QMod

\section{Introduction\label{intro}}
Over the years it has become clear that quantum structures do not only appear within situations in the micro world, but 
also arise in 
the macro world \cite{aerts1982,aerts1986,aertsdurtgribvanbogaertzapatrin1993,aerts2009,aerts2009b,aerts2010,aerts2010b}. In this respect, more recently \cite{aerts2009,aerts2009b,aerts2010,aerts2010b}, four major effects have been put forward  
which also appear in macroscopic situations, and give rise to the presence of quantum structures. These effects are `interference', `contextuality', `emergence' and `entanglement'. Sometimes it has been possible to use the full quantum apparatus of linear operators in complex Hilbert space to model these effects as they appear in macroscopic situations. But, in quite some occasions a formalism more general than standard quantum theory in complex Hilbert space is needed.

When investigating the structure of concepts, and how they combine to form sentences and texts, we already proposed a generalization of the standard quantum formalism, which we called a \emph{State Context Property}, or \emph{SCoP}, formalism, specifically designed to model concepts and their combinations \cite{aertsgabora2005a}. This generalization was inspired by work on quantum axiomatics \cite{aerts2002}, and later also used to analyze aspects of concepts, or inspire contextual approaches \cite{aertsgabora2002,nelson2007,gaboraroschaerts2008,flenderkittobruza2009,gaboraaerts2009,dhooghe2010,aertsczachorsozzo2010,velozgaboraeyjolfsonaerts2011,aertssozzo2011}. However, the SCoP formalism is 
very general, which led us to reflect about a formalism closer to the complex Hilbert space of standard quantum theory, more specific and hence mathematically more
efficient than SCoP, but at the same time general enough to cope with the modeling of the main quantum effects identified in the domains different from the micro-world.

In this article we aim to propose a general modeling theory capable of modeling situations in which the effects of contextuality, emergence, entanglement and interference appear. We will call this modeling theory \emph{Quantum Model Theory}, or {\it QMod}. It is not just a broad modeling scheme, because a specific powerful mathematical representation theorem is the heart of it. It is due to this representation theorem that the mathematical structure of {\it QMod} contains the potential to describe entanglement and emergence. We will see that the standard quantum mechanical formalism is a special case of {\it QMod}, were emergence and entanglement are consequences of respectively the linear structure of the Hilbert space and the tensorproduct procedure for compound quantum systems. In {\it QMod} no linearity is needed in principle, although it can be introduced if useful. This is the fundamental reason why {\it QMod} constitutes a powerful and helpful generalization of the quantum formalism. We will also see that {\it QMod} is a concretization of SCoP in a specific way when it is used to model concepts and their combinations. However, {\it QMod} is not only meant to model concepts and their combinations, like it was the case for SCoP. It aims at modeling all situations of entities where the effects of interference, contextuality, emergence and entanglement play a role, and in this sense it is more general than SCoP in its applications. 

Proceeding by analogy, we could say that {\it QMod} is a generalization of classical and quantum theory in a very similar way to how the general relativistic manifold formalism is a generalization of special relativity. 
Indeed, general relativity theory assumes that, for each point of space-time, hence locally, space-time can be considered operationally in an Euclidean way. Similarly, {\it QMod} assumes that, whenever a given measurement is considered, hence again locally, the probabilities are defined operationally and locally for this one measurement, and for an arbitrary set of states of the considered entity. For one fixed state and when an event space is defined on the set of outcomes, one can, for example, assume Kolmogorov's axioms to be valid. Remark that since an arbitrary set of states is considered locally, the probability structure will not be a single Kolmogorovian probability space, but a set of Kolmogorovian models, one for each state. Of course, we expect the overall general probability model to be non-Kolmogorovian in {\it QMod} if different measurements and different states are considered, as we expect that a non-Euclidean model arises if different points of space-time are considered in general relativity theory.

For the sake of completeness, let us briefly resume the content of this paper. We introduce the essentials of {\it QMod} in Sec. \ref{sectionrepresentation}, where we prove a representation theorem providing the steps needed to construct a quantum modeling. More specifically, the representation theorem states that any entity that can be described by means of its states, its contexts and specifically corresponding operationally defined probabilities admits a mathematical modeling in $\mathbb{R}^{n}$ and $\mathbb{C}^{m}$, with $n$ and $m$ suitably chosen. Successively, we supply in Sec. \ref{appl} applications that illustrate how {\it QMod} concretely works. Indeed, we firstly consider a conceptual entity (Sec. \ref{concepts}), then a macroscopic entity (Sec. \ref{vessel}). These two examples admit real and complex representations in ${\mathbb R}^{2}$ and ${\mathbb C}^{2}$, respectively. Finally, an abstract example in three dimensions (Sec. \ref{3example}) is provided which admits a real representation in ${\mathbb R}^{3}$ and a complex representation in ${\mathbb C}^{3}$. The treatment of a specific quantum effect, namely interference, concludes the paper (Sec. \ref{interference}). The latter allows one to understand how a mixed state differs from a superposition state within {\it QMod} \cite{young1802,debroglie1923,schrodinger1926,debroglie1928,jonsson1961,feynman1965,jonsson1974,ArndtNairzVos-AndreaeKellervanderZouwZeilinger1999}.

\section{A representation theorem\label{sectionrepresentation}}
Let us introduce the fundamental notions we need for our purposes. We will model `entities' where the notion of entity is to be understood in its most general sense. An entity is a collection of aspects of reality that hang together in such a way that different states exist without loosing the possibility of identification of the same entity in each of these states. Sometimes only one state exists, this is then the limiting case, and the entity is then just a situation. Let us right away give two examples that we will be using in the course of this article to illustrate different aspects of the quantum modeling scheme we introduce.

A first example of an entity that we consider is the concept {\it Animal}. The concept {\it Animal} can indeed be in different states, for example {\it Ferocious Animal} and {\it Sweet Animal} are two such possible states, and many more exists. Each time an adjective is put in front of the concept {\it Animal} another state of {\it Animal} is realized. But also each sentence, or paragraph, or piece of text, surrounding the concept {\it Animal}, places it in a different state. Also exemplars of the concept {\it Animal}, such as {\it Horse} and {\it Bear} can be considered to be states of the concept {\it Animal}. So clearly a large set of different states exist for the concept {\it Animal}. A second example of an entity that we consider is a {\it Vessel of Water}. Different volumes of water that the vessel can contain are different states of the {\it Vessel of Water}.

Along with the notions of entity and state we introduce the notions of measurement and outcome. A measurement consists of a specific context that is realized for the entity being able to be in different states, which is the reason that in some cases, e.g. in SCoP, we use the term `context' to indicate the measurement or measurement context. This context affects generally the state of the entity in different ways and as a consequence different outcomes to the measurement defined by this context can occur for a specific state of the entity. Usually the state of the entity is changed by the measurement, and the resulting state after the measurement on the entity can be identified with each of the outcomes. If this is the case, such an outcome can also be represented by this state. This is in fact why we did not introduce the notion of outcome in SCoP, because contexts, i.e. measurements in SCoP, always change a state of a concept into different possible new states, and hence the outcomes of this measurement context are identified with these states in the case of SCoP. Now, we explicitly want to introduce the notion of outcome in {\it QMod}, because we also want to be able to model situations where the state after a measurement is not identified. We also introduce the notion of probability of occurrence of an outcome, a measurement being performed, with the entity being in a state, as the limit of the relative frequency of this outcome by repetition. It is for this limit of relative frequency for a fixed state, after defining events related to the outcome set, for which Kolmogorov's axioms of probability can be supposed to be valid. In the case of SCoP, this probability is a transition probability from the state before the measurement context to the state after the measurement context.

\begin{definition}[Entity, State, Measurement, Outcome, Probability]
We consider the situation of an entity $S$ that can be in different states, and denote states by $p, q, \ldots$, and the set of states by $\Sigma$. Different measurements can be performed on the entity $S$ being in one of its states, and we denote measurements by $e, f, \ldots$, and the set of measurements by ${\cal M}$. With a measurement $e \in {\cal M}$ and the entity in state $p$, corresponds a set of possible outcomes $\{x_1, x_2, \ldots, x_j, \ldots, x_n\}$, and a set of probabilities $\{\mu(x_j,e,p)\}$, where $\mu(x_j,e,p)$ is the limit of the relative frequency of the outcome $x_j$, the situation being repeated where measurement $e$ is performed and the entity $S$ is in state $p$. We denote the final state corresponding to the outcome $x_j$ by means of $p_j$.
\end{definition}
The following theorem proves that it is always possible to realize the above introduced situation by means of a specific mathematical structure making use of a space of real numbers where the probabilities are derived as Lebesgue measures of subsets of real numbers. We also prove that on top of this real number realization a complex number realization exists, where the probabilities are calculated by making use of a scalar product similar to the one used in the quantum formalism.

\begin{theorem}[Representation theorem] \label{theoremrepresentation}
Let us consider a measurement $e \in {\cal M}$ and a state $p \in \Sigma$, and introduce the set $\{\mu(x_j, e, p) \ | \ j=1,\ldots, n \}$ of probabilities, where $\{x_1, \ldots, x_j, \ldots, x_n\}$ is the set of possible outcomes given $e$ and $p$. Then, it is possible to work out a representation of this situation in $\mathbb{R}^n$ where the probabilities are given by Lebesgue measures of appropriately defined subsets of $\mathbb{R}^n$, and a representation in $\mathbb{C}^m$ where the measurement is modeled in an analogous way as this is the case in the mathematical formalism of standard quantum theory defined on $\mathbb{C}^m$ as a complex Hilbert space.
\end{theorem}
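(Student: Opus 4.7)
The plan is to handle the two representations separately: first build the real one by partitioning a Lebesgue-measure-one region of $\mathbb{R}^n$ so that the prescribed probabilities appear as the measures of the cells of the partition, and then build the complex one by packaging the probability vector into a Hilbert-space amplitude in the standard Born-rule fashion.

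For the real representation the starting point is to choose an ambient region $R \subset \mathbb{R}^n$ of Lebesgue measure one — the unit hypercube $[0,1]^n$, or a unit-measure sub-slab of it, will do. Because $\{\mu(x_j,e,p)\}_{j=1}^{n}$ is a probability vector, $R$ can always be partitioned into $n$ pairwise disjoint measurable pieces $A_1,\ldots,A_n$ whose Lebesgue measures equal $\mu(x_1,e,p),\ldots,\mu(x_n,e,p)$. A perfectly explicit choice is to slice $R$ with hyperplanes perpendicular to one coordinate axis at the cumulative heights $\sigma_0 = 0,\ \sigma_k=\sum_{j\le k}\mu(x_j,e,p)$ and take $A_j$ to be the slab $\sigma_{j-1}\le t<\sigma_j$. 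The state $p$ is then represented by the uniform distribution on $R$, the measurement $e$ by the hidden-variable procedure of drawing a point $z$ uniformly in $R$ and returning outcome $x_j$ whenever $z\in A_j$, and the post-measurement state $p_j$ by the uniform distribution on $A_j$. The probability of $x_j$ is by construction the Lebesgue measure of $A_j$, i.e.\ $\mu(x_j,e,p)$.

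For the complex representation the natural choice is $m=n$ (any larger $m$ also works after trivial padding). Fix an orthonormal basis $\{|f_1\rangle,\ldots,|f_n\rangle\}$ of $\mathbb{C}^n$ and represent the state $p$ by the unit vector
\[
|\psi_p\rangle \;=\; \sum_{j=1}^{n}\sqrt{\mu(x_j,e,p)}\;e^{i\theta_j}\,|f_j\rangle ,
\]
for any choice of phases $\theta_j\in\mathbb{R}$. Represent the measurement $e$ by the projection-valued family $\{P_j = |f_j\rangle\langle f_j|\}_{j=1}^{n}$, associating outcome $x_j$ with the projector $P_j$ and the post-measurement state $p_j$ with the normalised projection $P_j|\psi_p\rangle/\|P_j|\psi_p\rangle\| = |f_j\rangle$, exactly as in standard quantum theory. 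The Born rule then gives
\[
|\langle f_j\,|\,\psi_p\rangle|^{2} \;=\; \mu(x_j,e,p) ,
\]
as required.

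The main obstacle is not the existence of either construction for a single pair $(e,p)$ — both are essentially routine — but the fact that they are highly non-canonical. The freedom in choosing $R$ and its partition on the real side, and the freedom in choosing the basis, the dimension $m$, and the phases $\theta_j$ on the complex side, are innocuous for one measurement on one state, but will become the genuine technical issue as soon as one wishes to coherently represent several states and several measurements inside one model in a way that still reproduces the interference and entanglement phenomena highlighted in the introduction. Since the theorem as stated only asserts the \emph{local} existence of a representation for a fixed $(e,p)$, the proof can avoid those global coherence questions and reduces to the two explicit verifications just outlined.
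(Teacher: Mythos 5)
Your proof is correct for the theorem as literally stated, and your complex construction is essentially the paper's (the paper allows $n\le m\le n^2$ with higher-rank diagonal projectors and normalization $a_j=\frac{1}{b}\sqrt{\mu(x_j,e,p)}$; you specialize to $m=n$, $b=1$, rank-one projectors, which suffices here). The real construction, however, takes a genuinely different and much less structured route. The paper represents the state itself by the point $v(e,p)=\sum_j\mu(x_j,e,p)h_j$ inside the outcome simplex $S_n(e)$ (the convex hull of the canonical basis), and the partition is then \emph{derived} canonically from that point: $A_j(e,p)$ is the convex hull of the basis vectors with $h_j$ replaced by $v(e,p)$, and the identity $\mu(x_j,e,p)=m(A_j(e,p))/m(S_n(e))$ is verified by comparing determinants of the associated parallelepipeds. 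You instead pick an arbitrary unit-measure region and slice it into slabs whose measures are prescribed to be the $\mu(x_j,e,p)$ by fiat, representing the state by the uniform distribution on $R$ (which is the same for every state) and pushing all state-dependence into the choice of partition. Both yield the probabilities as Lebesgue measures, so the theorem is proved either way; but the paper's version buys the geometric encoding of the state as a single vector in $\mathbb{R}^n$, which is exactly what the later sections exploit (interference is read off from $v(e,r)$ failing to lie on the segment between $v(e,p)$ and $v(e,q)$, and the announced entanglement results live on this real simplex structure), whereas your version buys only brevity. You correctly flag that the non-canonical freedom becomes the real issue once several states and measurements must coexist; the paper's simplex construction is precisely its answer to that, fixing one ambient simplex per measurement and one uniform hidden variable $\lambda$ for all states, with only the dividing point $v(e,p)$ varying.
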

\begin{proof}
We introduce the space $\mathbb{R}^n$, and its canonical basis $h_1=(1, \ldots, 0, \ldots, 0)$, $h_2=(0, 1, 0, \ldots, 0)$, \ldots, $h_j=(0, \ldots, 1, \ldots)$, \ldots, $h_n=(0, \ldots, 1)$. The situation of the measurement $e$ and state $p$ can be represented by the vector
\begin{equation}
v(e,p)=\sum_{j=1}^n\mu(x_j,e,p)h_j
\end{equation}
which is a point of the simplex $S_n(e)$, the convex closure of the canonical basis $\{h_1, \ldots, h_j, \ldots, h_n\}$ in $\mathbb{R}^n$. We call $A_j(e,p)$ the convex closure of the vectors $\{h_1, h_2, \ldots$ $, h_{j-1}, v(e,p), h_{j+1}, \ldots, h_n\}$. We use this configuration to construct a micro-dynamical model for the measurement dynamics of $e$ for the entity in state $p$. This micro-dynamics is defined as follows, a vector $\lambda$ contained in the simplex $S_n(e)$, hence we have
\begin{equation}
\lambda=\sum_{j=1}^n\lambda_jh_j \quad 0 \le \lambda_j \le 1 \quad \sum_{j=1}^n\lambda_j=1
\end{equation}
determines the dynamics of the measurement $e$ on the state $p$ in the following way. If $\lambda \in A_j(e,p)$, and is not one of the boundary points (hence $\lambda$ is contained in the interior of $A_j(e,p)$), then the measurement $e$ gives with certainty, hence deterministically, rise to the outcome $x_j$, with the entity being in state $p$. If $\lambda$ is a point of the boundary of $A_j(e,p)$, then the outcome of the experiment $e$, the entity being in state $p$, is not determined. Let us prove that from the above construction we can derive the probabilities $\mu(x_j, e, p)$ from just Lebesgue measuring the sets of relevant real numbers as subsets of $S_n(e)$. Of course, we make the hypothesis that the micro-dynamical modeling of the measuring process is such that the vector $\lambda$ is chosen at random in the simplex $S_n(e)$ with a randomness modeled by the Lebesque measure on this simplex. Then, following the formulation of the micro-dynamics of the measurement process $e$ for $S$ being in state $p$, we have that the $\mu(x_j, e, p)$, being the probability to obtain outcome $x_j$, is given by the Lebesgue measure of the set of vectors $\lambda$ that are such that this outcome is obtained deterministically, hence this are the $\lambda$ contained in $A_j(e,p)$, divided by the Lebesgue measure of the total set of vectors $\lambda$, which are the $\lambda$ contained in $S_n(e)$. This means that
\begin{equation} \label{probabilitylebesgue}
\mu(x_j, e, p)={m(A_j(e,p)) \over m(S_n(e))} .
\end{equation}
To calculate the Lebesgue measures, let us introduce the following notations. If $h_1,h_2,\ldots,h_n$ are vectors in $\mathbb{R}^n$, we denote by $M(h_1,h_2,\ldots,h_n)$ the $n\times n$ matrix, where $M_{jk} = (h_j)_k$. We denote by $\det(h_1,h_2,\ldots,h_n)$ the determinant of this matrix $M(h_1,h_2,\ldots,h_n)$, and by ${\it Par}(h_1,h_2,\ldots,h_n)$ the parallelepiped spanned by the $n$ vectors. If we consider the two parallelepipeds ${\it Par}(h_1,h_2,\ldots,h_n)$ and ${\it Par}(h_1,h_2,\ldots, h_{j-1},v(e,p),h_{j+1},\ldots,h_n)$, then they are constructions with the same heights over bases which are the simplexes $S_n(e)$ and $A_j(e,p)$. This means that the volumes of these parallelepipeds, as $n$ dimensional subsets of $\mathbb{R}^n$, are equal to the volumes of the simplexes $S_n(e)$ and $A_j(e,p)$, multiplied by the same constant number $c(n)$, which is a number depending on the global dimension $n$. Now, the volumes of the two parallelepipeds, let us denote them $m({\it Par}(h_1,h_2,\ldots,h_n))$ and $m({\it Par}(h_1,h_2,\ldots, h_{j-1},v(e,p),h_{j+1},\ldots,h_n))$ can be calculated by means of the determinants of their matrices, and hence we can also calculate the volumes of the simplexes by these determinants. More specifically we have
\begin{eqnarray} \label{parallelepiped01}
&&m(S_n(e))=c(n)m({\it Par}(h_1,h_2,\ldots,h_n)) \\ \label{determinant01}
&&m({\it Par}(h_1,h_2,\ldots,h_n))=\det\left( \begin{array}{ccccc}
h_1 & \ldots & h_j & \ldots & h_n
\end{array} \right)  \\ \label{parallelepiped02}
&&m(A_j(e,p))=c(n)m({\it Par}(h_1,h_2,\ldots,v(e,p),\ldots,h_n)) \\ \label{determinant02}
&&m({\it Par}(h_1,h_2,\ldots,v(e,p),\ldots,h_n))=\det\left( \begin{array}{ccccc}
h_1 & \ldots & v(e,p) & \ldots & h_n
\end{array} \right)
\end{eqnarray}
and calculating the determinants of the matrices we get
\begin{eqnarray} \label{lebesgue01}
&&\det\left( \begin{array}{ccccc}
h_1 & \ldots & h_j & \ldots & h_n
\end{array} \right)=1 \\ \label{lebesgue02}
&&\det\left( \begin{array}{ccccc}
h_1 & \ldots & v(e,p) & \ldots & h_n
\end{array} \right)=\mu(x_j, e, p) .
\end{eqnarray} 
From (\ref{parallelepiped01}) and (\ref{parallelepiped02}) follows that
\begin{equation}
{m(A_j(e,p)) \over m(S_n(e))}={m({\it Par}(h_1,h_2,\ldots,v(e,p),\ldots,h_n)) \over m({\it Par}(h_1,h_2,\ldots,h_n))} .
\end{equation}
And from (\ref{determinant01}), (\ref{determinant02}), (\ref{lebesgue01}) and (\ref{lebesgue02}) follows (\ref{probabilitylebesgue}).

For the quantum representation we introduce a set of orthogonal projection operators $\{M_k\ \vert k= 1,\ldots,n\}$ on a complex Hilbert $\mathbb{C}^m$ space, with $n \le m \le n^2$, that form a spectral family. This means that $M_k \perp M_l$ for $k\not=l$ and $\sum_{k=1}^nM_k=\mathbbmss{1}$, and we take the $M_k$ such that they are diagonal matrices in $\mathbb{C}^m$. More concretely, each $M_k$ is a matrix with 1's at some of the diagonal places, and zero's everywhere else. The number of 1's is between 1 and $n$, for each $M_k$, and the collections of 1's hang together, their mutual intersections being empty, and the union of all of them being equal to the collection of 1's of the unit matrix $\mathbbmss{1}$. The state is represented by a vector $w(e,p)$ of $\mathbb{C}^m$, such that
\begin{equation} \label{quantumsolution}
\mu(x_k, e, p)=\langle w(e,p)\ \vert M_k\ \vert w(e,p)\rangle =\parallel M_{k} |w(e,p)\rangle \parallel^{2} .
\end{equation}   
A possible solution is 
\begin{equation}
w(e,p)=\sum_{j=1}^m a_je^{i\alpha(e,p)_j}h_j \quad{\rm with} \quad a_j={1 \over b}\sqrt{\mu(x_j, e, p)}
\end{equation}
where $h_j$ is the canonical basis of $\mathbb{C}^m$, and $b$ is the dimension of the projector $M_k$ if $h_j$ is such that $M_kh_j=h_j$. But this is not the only solution, and it might also not be the appropriate solution for the situation we want to model. It shows however that a solution exists, which proves that it is always possible to built this local quantum model.
\end{proof}

The above theorem is an application of the `hidden measurement' approach that we elaborated in our Brussels research group during the eighties and nineties of the foregoing century, with the aim of formulating a contextual hidden variable model for quantum theory \cite{aerts1986,aerts1993,aerts1994,aertsaerts1997,aertsaertscoeckedhooghedurtvalckenborgh1997,aertssven2002,aertssven2005}.

With the above theorem we have constructed a representation of the collection of states and experiments that lead to the same set of outcomes. In this sense, the $\mathbb{R}^n$ model and the $\mathbb{C}^m$ that we have constructed is a model for the interaction between state and experiment. The set of outcomes constitutes a context in which this interaction takes place. 

Concluding this section, it is important to observe that the representation theorem proved above allows one to identify some quantum--like aspects without the necessity of assuming an underlying linear structure. This aspect will manifestly emerge from the treatment of entanglement in a forthcoming paper \cite{aertssozzo2012b}, where we prove that the tensorproduct structure appears already on the level of the real space description, and that it is possible to identify entangled states of an entity without the need of linearity. Although we mentioned also the effect of `emergence' as one of the characteristic quantum effect, we do not consider `emergence' here or in \cite{aertssozzo2012b}, but give it explicitly attention in \cite{aertsgaborasozzo2012}. One could say that {\it QMod} is a generalization of standard quantum mechanics in the sense that, when the real space representation is used, no linearity at all is at play, and when the complex space representation is used, linearity is present only locally. We do not insist on this point, for the sake of brevity, and refer to \cite{aertssozzo2012b} for a detailed analysis of the linearity issue.  

\section{Applications of {\it QMod}\label{appl}}
{\it QMod} can be applied to the modeling of any type of entity that can be described by a set of states, a set of contexts and probabilities defined for outcomes. In the following, we consider some relevant examples that show how our construction works. As we have anticipated at the end of the previous section, these examples will be employed in the description of entanglement in {\it QMod} in a forthcoming paper \cite{aertssozzo2012b}.

\subsection{Concepts\label{concepts}}
Let us consider the example of the entity which is the concept {\it Animal}. We consider a measurement $e$, where a person is asked to choose between the animal being a {\it Horse} or a {\it Bear}, hence there are two outcomes $\{H,B\}$. We consider only one state for {\it Animal}, namely the ground state which is the state where animal is just animal, i.e. the bare concept, and let us denote it $p$. Let us denote by $\mu(H,e,p)$ the probability that {\it Horse} is chosen when $e$ is performed, and by $\mu(B,e,p)$ the probability that {\it Bear} is chosen.

Let us now work out a mathematical construction put forward in the representation theorem proven in Sec. \ref{sectionrepresentation}. For the measurement $e$ we consider the vector space $\mathbb{R}^2$ and its canonical basis $\{(1, 0), (0, 1)\}$. The state $p$ is contextually represented with respects to the measurement $e$ by the vector $v(e,p)=(\mu(H,e,p),\mu(B,e,p))$ in $\mathbb{R}^2$. We introduce the vector $\lambda=(r, 1-r)$, with $0 \le r \le 1$, such that for $(r, 1-r)$ contained in the convex closure of $(1, 0)$ and $(\mu(H,e,p),\mu(B,e,p))$, we get outcome {\it Bear}, while for $(r, 1-r)$ contained in the convex closure of $(\mu(H,e,p),\mu(B,e,p))$ and $(0, 1)$ we get {\it Horse}. Let us calculate the respective lengths and see that we re-obtain the correct probabilities. Denoting the length of the piece of line from $(1, 0)$ to $(\mu(H,e,p),\mu(B,e,p))$ by $d$, we have ${d \over \sqrt{2}}=\mu(B,e,p)$, and ${\sqrt{2}-d \over \sqrt{2}}=\mu(H,e,p)$. 

We can also construct a quantum mathematics model in $\mathbb{C}^2$. Therefore we consider the vector $w(e,p)=(\sqrt{\mu(H,e,p)}e^{i\alpha(e,p)_H},\sqrt{\mu(B,e,p)}e^{i\alpha(e,p)_B})$ in $\mathbb{C}^2$. We have $\mu(H,e,p)=|\langle (1,0)| w(e,p)\rangle|^2$ and $\mu(B,e,p)=|\langle (0,1)| w(e,p)\rangle|^2$, which shows that also the $\mathbb{C}^2$ construction gives rise to the correct probabilities. 

\subsection{Vessel of water\label{vessel}} 
As a second example, we consider an entity $S$ that is a vessel of water containing a volume of water between 0 and 20 liters. Suppose that we are in a situation where we lack knowledge about the exact volume contained in the vessel, and call $p$ the state describing this situation. We consider a measurement $e$ for the vessel that consists in pouring out the water by means of a siphon, collecting it in a reference vessel, where we can read of the volume of collected water. We attribute outcome $M$ if the volume is more than 10 liters and the outcome $L$ if it is less than 10 liters, hence the set of outcomes for $e$ is $\{ M, L \}$. We introduce the probabilities $\mu(M,e,p)$ and $\mu(L,e,p)$ for the outcomes $M$ and $L$, respectively. As in the case of concepts, we construct a mathematical representation in $\mathbb{R}^2$ and its canonical basis $\{(1, 0), (0, 1)\}$. The state $p$ is contextually represented with respects to the measurement $e$ by the vector $v(e,p)=(\mu(M,e,p),\mu(L,e,p))$ in $\mathbb{R}^2$. The simplex $A_M(e,p)$ is the line connecting the points $(\mu(M,e,p),\mu(L,e,p))$ and $(0,1)$, while the simplex $A_L(e,p)$ is the line connecting the points $(1,0)$ and $(\mu(M,e,p),\mu(L,e,p))$. We introduce the vector $\lambda=(r, 1-r)$, with $0 \le r \le 1$, such that for $(r, 1-r)$ contained in the convex closure of $(1, 0)$ and $(\mu(M,e,p),\mu(L,e,p))$, we get outcome $L$, while for $(r, 1-r)$ contained in the convex closure of $(\mu(M,e,p),\mu(L,e,p))$ and $(0, 1)$ we get $M$. Let us calculate the respective lengths and see that we find back the correct probabilities. Denoting the length of the piece of line from $(1, 0)$ to $(\mu(M,e,p),\mu(L,e,p))=(1/2,1/2)$ by $d$, we have ${d \over \sqrt{2}}=\mu(L,e,p)$, an ${\sqrt{2}-d \over \sqrt{2}}=\mu(M,e,p)$. Thus, $d=\frac{\sqrt{2}}{2}$ allows one to recover the right probabilities. 

The quantum mathematics model in $\mathbb{C}^2$ can be constructed as follows. We consider the orthogonal projection operators 
$M_{M}=\left (\begin{array}{cc} 1 & 0 \\ 0 & 0
 \end{array}\right )$ and  $M_{L}=\left (\begin{array}{cc} 0 & 0 \\ 0 & 1
 \end{array}\right )$, and the vector $w(e,p)=(\sqrt{\mu(M,e,p)}e^{i\alpha(e,p)_M},$ $\sqrt{\mu(L,e,p)}e^{i\alpha(e,p)_L})$ in $\mathbb{C}^2$. We have $\mu(M,e,p)=\langle w(e,p) | M_{M} | w(e,p)\rangle$ and $\mu(L,e,p)=\langle w(e,p) | M_{L} | w(e,p)\rangle$, which also gives rise to the correct probabilities.

\subsection{Illustration in three dimensions\label{3example}}
Let $S$ be an entity and let us consider the situation where the measurement $e$ on $S$ has three possible outcomes $\{x_1, x_2, x_3\}$. We denote by $\mu(x_1, e, p)$, $\mu(x_2, e, p)$ and $\mu(x_3, e, p)$ the probabilities for these outcomes to occur, performing the measurement $e$, the entity being in state $p$. The construction leading to the representation theorem takes then place in $\mathbb{R}^3$. We have represented the canonical basis vectors $h_1=(1,0,0)$, $h_2=(0,1,0)$ and $h_3=(0,0,1)$ of $\mathbb{R}^3$ in Fig. \ref{threedimensional}, and also drawn the simplexes $S_3(e)$, $A_1(e,p)$, $A_2(e,p)$ and $A_3(e,p)$. 
\begin{figure}[H] 
\centerline {\includegraphics[scale=1.1]{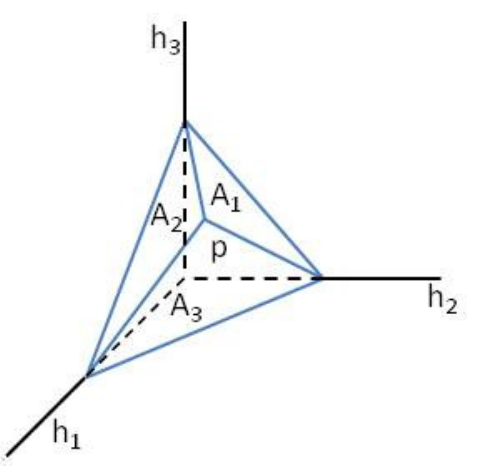}}
\caption{A simple 3-dimensional picture showing the construction needed for the representation theorem.} \label{threedimensional}
\end{figure}
\noindent
We now introduce the vector $v(e,p)=\mu(x_1,e,p)h_1+\mu(x_2,e,p)h_2+\mu(x_3,e,p)h_3=(\mu(x_1,e,p),\mu(x_2,e,p),$ $\mu(x_3,e,p))$. We have that $A_1(e,p)$, $A_2(e,p)$ and $A_3(e,p)$ are the convex closures of $\{v(e,p), h_2, h_3 \}$, $\{ h_1, v(e,p), h_3 \}$ and $\{ h_1, h_2, v(e,p) \}$, respectively. Then, let the point $\lambda$ belonging to the simplex $S_3(e)$ be defined as $\lambda=\lambda_1 h_1+\lambda_2 h_2+\lambda_3 h_3=(\lambda_1,\lambda_2,\lambda_3)$, with $0 \le \lambda_1,\lambda_2,\lambda_3 \le 1$ and $\lambda_1+\lambda_2+\lambda_3=1$. Finally, measurement $e$ gives outcome $x_j$ with certainty when $S$ is in state $p$ if and only if $\lambda \in A_j(e,p)$, and $\mu(x_j, e, p)=m(A_j(e,p))/m(S_3(e))$. 

Coming to the quantum representation in $\mathbb{C}^3$, we introduce the orthogonal projection operators
\begin{equation}
M_{1}=
\left (
\begin{array}{ccc} 
1 & 0 & 0 \\ 
0 & 0 & 0 \\
0 & 0 & 0
\end{array}\right )
\quad 
M_{2}=
\left (
\begin{array}{ccc} 
0 & 0 & 0 \\ 
0 & 1 & 0 \\
0 & 0 & 0
\end{array}\right )
\quad 
M_{3}=
\left (
\begin{array}{ccc} 
0 & 0 & 0 \\ 
0 & 0 & 0 \\
0 & 0 & 1
\end{array}\right ),
\end{equation}
and the vector $w(e,p)=(\sqrt{\mu(x_1,e,p)}e^{i\alpha},\sqrt{\mu(x_2,e,p)}e^{i\beta}, \sqrt{\mu(x_3,e,p)}e^{i\gamma})$. Then, we have $\mu(x_j, e, p)=\langle w(e,p)|M_{j}|w(e,p)\rangle=\parallel M_{j}|w(e,p)\rangle\parallel^{2}$, $j=1,2,3$.

\section{Interference and superposition\label{interference}}
The results obtained in the previous sections can be applied at once to a typical quantum phenomenon, namely, interference \cite{young1802,debroglie1923,schrodinger1926,debroglie1928,jonsson1961,feynman1965,jonsson1974,ArndtNairzVos-AndreaeKellervanderZouwZeilinger1999}.

To investigate a situation of interference in our general modeling scheme we introduce a measurement $e$ with an outcome set $\{x_1, \ldots, x_j, \ldots, x_n\}$, and sets of probabilities $\{P(x_j, e, p)\vert\ j=1\ldots n\}$, $\{P(x_j, e, q)\vert\ j=1\ldots n\}$ and $\{P(x_j, e, r)\vert\ j=1\ldots n\}$ with respect to states $p$, $q$ and $r$. We now wonder how the probabilities with respect to $r$ are related to the probabilities with respect to $p$ and the ones with respect to $q$, in case $r$ is a superposition state of $p$ and $q$.  
Let us consider this situation. Hence, we suppose that the vector $w(e,r)$ is a linear combination of the vectors $w(e,p)$ and $w(e,q)$, more specifically $w(e,r)=ae^{i\alpha}w(e,p)+be^{i\beta}w(e,q)$. From this follows that
\begin{eqnarray}
w(e,r)&=&\sum_{j=1}^n\sqrt{P(x_j, e, r)}e^{i\alpha(e,r)_j}h_j \nonumber \\
&=&\sum_{j=1}^n(ae^{i\alpha}\sqrt{P(x_j, e, p)}e^{i\alpha(e,p)_j}+be^{i\beta}\sqrt{P(x_j, e, q)}e^{i\alpha(e,q)_j})h_j
\end{eqnarray}
and hence
\begin{equation}
\sqrt{P(x_j, e, r)}e^{i\alpha(e,r)_j}=\sqrt{a^2P(x_j, e, p)}e^{i(\alpha(e,p)_j+\alpha)}+\sqrt{b^2P(x_j, e, q)}e^{i(\alpha(e,q)_j+\beta)} \nonumber
\end{equation}
which leads to
\begin{eqnarray}
&&P(x_j, e, r)=(\sqrt{a^2P(x_j, e, p)}e^{-i(\alpha(e,p)_j+\alpha)}+\sqrt{b^2P(x_j, e, q)}e^{-i(\alpha(e,q)_j+\beta)}) \cdot \nonumber \\
&&(\sqrt{a^2P(x_j, e, p)}e^{i(\alpha(e,p)_j+\alpha)}+\sqrt{b^2P(x_j, e, q)}e^{i(\alpha(e,q)_j+\beta)}) \nonumber \\
&&=a^2P(x_j, e, p)+b^2P(x_j, e, q) \nonumber \\ \label{interference1}
&&+2ab\sqrt{P(x_j, e, p)P(x_j, e, q}\cos(\alpha(e,p)_j-\alpha(e,q)_j+\alpha-\beta) .
\end{eqnarray}
The third term of Eq. (\ref{interference1}) is the \emph{interference term}. If this term is different from zero, which is generally the case, the vector $v(e,r)$ is not located on the line segment between the vectors $v(e,p)$ and $v(e,q)$.

\end{document}